\begin{document}
%
\title{On the Necessary Memory to Compute the Plurality in Multi-Agent Systems}

%
%
\author{
Emanuele Natale\inst{1}
\and
Iliad Ramezani\inst{2}
}
\institute{%
Max Planck Institute for Informatics, Saarbr\"ucken, Germany
 \& Universit\'e C\^ote d'Azur, CNRS, I3S, Inria, France\\
\email{natale@i3s.unice.fr}
\and
Sharif University of Technology, Tehran, Iran\\
\email{iliramezani@ce.sharif.edu} 
}
\authorrunning{E. Natale et al.}
%

\maketitle
\begin{abstract}
    We consider the 
\emph{Relative-Majority Problem} (also known as \emph{Plurality}), 
in which,
given a multi-agent system where each agent is initially provided an input
value out of a set of $k$ possible ones, 
each agent is required to eventually compute the input value with
the highest frequency in the initial configuration.
We consider the problem in the general Population Protocols
model in which, given an underlying undirected
connected graph whose nodes represent the agents, edges are
selected by a \emph{globally fair} scheduler.  

The \emph{state complexity} that is required for solving the Plurality
Problem (i.e., the minimum number of memory states that each agent needs to
have in order to solve the problem), has been a long-standing open problem. 
The best protocol so far for the general multi-valued case requires
polynomial memory: Salehkaleybar et al. (2015) devised a
protocol that solves the problem by employing $\bigo(k 2^k)$ states per agent,
and they conjectured their upper bound to be optimal.  
On the other hand, under the strong assumption that agents initially agree
on a total ordering of the initial input values, G\k asieniec et al.
(2017),
provided
an elegant logarithmic-memory plurality protocol.

In this work, we refute Salehkaleybar et al.'s conjecture, by providing a
plurality protocol which employs $\bigo(k^{11}) $ states per agent.
Central to our result is an ordering protocol which allows to leverage on
the plurality protocol by G\k asieniec et al., of independent interest. 
We also provide a $\Omega(k^2)$-state lower bound on the necessary
memory to solve the problem, proving that the Plurality Problem cannot be
solved within the mere memory necessary to encode the output. 

\end{abstract}

\section{Introduction}

Consider a network of $n$ people, where each person supports one opinion from a
set of $k$ possible opinions. There is also a \emph{scheduler} who decides
in each round which pair of neighbors can interact. 
The goal is to \emph{eventually} reach an agreement on the opinion with the largest
number of supporters, i.e. the \emph{plurality} opinion (or \emph{majority}
when $k=2$).
Here, eventually means at an unspecified
moment in time, which the agents are not necessarily aware of (i.e. \emph{global termination} is not required \cite{santoro_design_2006}).

The main resource we are interested in minimizing is the 
\emph{state complexity} of each node: 

\emph{How many different states does each person need to
go through during such computation?}

This \emph{voting} task is known as the \emph{Plurality Problem} (or
as the \emph{Voting Problem}) in the asynchronous Population Protocols model
\cite{AADFP06,SSG15}.
For $k=2$, the problem is well understood: each person needs to maintain two bits in order for the people to elect the opinion of the majority~\cite{BTV09,MNRS17}, regardless of the network size $n$, and the problem cannot be solved with a single bit~\cite{MNRS17}. 

However, the state complexity of the problem for general $k$
has so far remained elusive: a clever protocol by Salehkaleybar et al.~\cite{SSG15},
called DMVR, shows how to solve the problem with $O(k2^k)$ states per person.
They conjectured the DMVR protocol to be optimal:

\smallskip
\begin{minipage}{0.95\linewidth}
\emph{``We conjecture that the DMVR protocol is an optimal
    solution for majority voting problem, i.e. at least $k 2^{k-1}$ states are required for any possible solution."}  
\vspace{5pt}
\end{minipage}

On the other hand, under the assumption that agents initially agree on a
total ordering of the initial input values, \cite{gasieniec_deterministic_2016}
provide an elegant plurality protocol which makes use of a polynomial number
of states only. 
It remained however rather unclear whether the above assumption can
be removed in order to achieve a polynomial number of states for the
general Plurality Problem as well. 

\subsection{Related Work}

Progress towards understanding the inherent computational complexity for a
multi-agent system to achieve certain tasks has been largely empirical in
nature. More recently, deeper insights have been offered by analytical studies
with respect to some coordination problems \cite{ranjbar__sahraei_theory_2014}. 
In this regard, understanding the amount of memory necessary for a multi-agent system in order
to solve a computational problem is a fundamental issue, as it constrains the
simplicity of the individual agents which make up the system
\cite{pitoni18}. 
Several research areas such as Chemical Reaction Networks \cite{Doty14} and Programmable
Matter \cite{gmyr_shape_2018} investigate the design of computing systems composed of elementary
units; in this regard, a high memory requirement for a computational problem
constitute a prohibitive barrier to its feasibility in such systems. 

The Plurality Problem (also known as Plurality \emph{Consensus} Problem in Distributed Computing),
is an extensively studied problem in many areas of distributed computing, such
as population protocols~\cite{AADFP06,BTV11,BTV09,MNRS17,SSG15}, fixed-volume
Chemical Reaction Networks~\cite{Doty14,TZB96}, asynchronous Gossip
protocols~\cite{BCNPS15,BCNPT16,BGPS06,GhaffariL17,GhaffariP16a}, Statistical
Physics \cite{LP08} and Mathematical Biology
\cite{BDDS14,Animals,MJTNS12,SUMPTER20081773}.

In the Population Protocols model, the memory is usually measured in terms of
the number of \emph{states} (state complexity) rather than the number of bits, 
following the convention for abstract automata \cite{holzer_descriptional_2011}. 
%
In the context of the Plurality Problem, for $k=2$, the protocols
of~\cite{BTV09,MNRS17} require $4$ states per node, and in~\cite{MNRS17}, they
showed that the problem cannot be solved with $3$ states. For general $k$, the
protocol of~\cite{SSG15} uses $O(2^{k-1}\cdot k)$ states per node, and the only
lower bound known has been so far the trivial $\Omega(k)$, as each node/agent
needs at least $k$ distinct states to specify its own opinion (which is from a
set of size $k$).
Under the crucial assumption that agents initially agree on a representation of
the input values as distinct integers, \cite{gasieniec_deterministic_2016}
provides an elegant solution to the Plurality Problem which employs $\bigo
(k^6)$ states only.

\subsection{Our Results}

In this work we refute the conjecture of~\cite{SSG15}, by devising a general
\emph{ordering} protocol which allows the agents to agree on a mapping of the
initial $k$ input values to the integers $\left\{ 0,\cdots,k-1 \right\}$,
thus satisfying the assumption of the protocol by
\cite{gasieniec_deterministic_2016}. 
We further show how to adapt the plurality protocol by
\cite{gasieniec_deterministic_2016} in a way that allows to couple its
execution in parallel with the ordering protocol such that, once the ordering
protocol has converged to the aforementioned mapping, the execution of the plurality protocol is also eventually
consistent with the provided ordering of colors. 
We emphasize that
agents are not required to detect when the protocol terminates; this is
indeed easily shown to be impossible under the general assumption of a fair
scheduler.
The resulting plurality protocol make use of $\bigo(k^{11})$ states
per agent. 
\begin{thm}
    \label{thm:generalupper}
    There is a population protocol $P_{general}$ which solves the Plurality
    Problem under a globally fair scheduler, by employing $\bigo(k^{11})$
    states per agent. 
\end{thm}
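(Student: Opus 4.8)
The overall strategy is to run two protocols in parallel on a product state space: an \emph{ordering protocol} $P_{ord}$ that lets the agents agree on an injective map from the $k$ input colors into $\{0,\dots,k-1\}$, and (a suitable adaptation of) the logarithmic-memory plurality protocol of \cite{gasieniec_deterministic_2016}, henceforth $P_{GHK}$, which solves Plurality \emph{given} such an ordering. First I would design $P_{ord}$. The natural idea is that each agent stores, besides its own input color, a partial ranking of the colors it has ``seen'', encoded as a function from a subset of colors to ranks; when two agents meet they merge their partial information and resolve conflicts by a fixed tie-breaking rule (e.g. lexicographic on the multiset of colors, or ``smaller current rank wins, ties broken by color-id once both colors are known''). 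Under a globally fair scheduler every edge is used infinitely often, so information propagates through the connected graph and the partial rankings converge to a single global ranking consistent across all agents. The state cost here is roughly the number of partial injective maps from $k$ colors to $k$ ranks that one needs to represent during the computation; I would argue this can be kept polynomial in $k$ — this is where the bulk of the $k^{11}$ (minus the $k^6$ from $P_{GHK}$) budget goes — by only ever storing a bounded-size ``certificate'' of the ordering rather than the full history.

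The second ingredient is the coupling. The subtlety is that $P_{GHK}$ is only guaranteed correct when \emph{started} from a configuration where everyone already agrees on the ordering, whereas here the ordering is only reached eventually, and in the meantime the plurality component has been running on garbage (or on stale orderings). So I would restructure $P_{GHK}$ so that a change in the locally-held ordering triggers a \emph{reset} of the plurality component at that agent, re-initializing its plurality state from its own input color under the new ordering. The key claim is then: once $P_{ord}$ has globally converged (which happens after finitely many interactions along a fair execution, though at an unknown time), from that point on no more resets occur, and the plurality component behaves exactly like a fresh run of $P_{GHK}$ on the (now fixed) correct ordering — hence it converges to the plurality color. One must check that $P_{GHK}$'s correctness proof only uses the initial configuration and fairness, not any global clock, so that ``restarting in the middle of a fair execution'' is still a fair execution of $P_{GHK}$ from a valid initial configuration; this is the standard ``eventually the system looks freshly initialized'' argument for self-stabilizing-style population protocols.

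The main obstacle I anticipate is making the reset mechanism \emph{clean}: when agent $u$ with an outdated ordering interacts with agent $v$ holding the correct one, $u$ adopts $v$'s ordering and resets, but $v$'s plurality state must not be corrupted by the interaction, and $u$'s reset must not destroy plurality progress that is ``valid'' (already consistent with the final ordering). The cleanest way around this is to simply let \emph{every} ordering update at an agent, even a seemingly minor one, force a full reset of that agent's plurality state to the input-color initialization — correctness only needs that resets stop eventually, not that they are rare, so we can be maximally conservative. I also need to verify that $P_{GHK}$ as used is robust to the fact that different agents reset at different times during the pre-convergence phase; again, once the global ordering is fixed, the last reset at each agent defines a valid (late) start time for that agent, and a short argument shows a fair execution with staggered-but-finite start times for each agent is equivalent to a fair execution of $P_{GHK}$ proper. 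The state count is then $O(k^6)$ for the $P_{GHK}$ component times $O(k^5)$ for the ordering certificates, giving $O(k^{11})$ as claimed.
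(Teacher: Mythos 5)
Your high-level decomposition (ordering protocol $+$ adapted G\k{a}sieniec et al.\ protocol run in parallel) matches the paper's, but both of your key steps have genuine gaps. First, the ordering protocol: your ``merge partial rankings with a fixed tie-breaking rule'' needs a tie-breaker such as ``lexicographic'' or ``color-id'', but the model explicitly forbids any a priori order or identifier on colors --- agents can only test two colors for equality. If such identifiers existed, the assumption of \cite{gasieniec_deterministic_2016} would already be satisfied and there would be nothing to prove; breaking this circularity is precisely the content of the ordering problem. Moreover, a ``partial injective map from a subset of colors to ranks'' has super-exponentially many values, and your claim that a polynomial-size ``certificate'' suffices is asserted, not constructed. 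The paper's solution is a distributed linked list of per-color leaders, where each agent stores only its own rank and the colors of its predecessor and successor ($\bigo(k^4)$ states); the real work is coping with the multiple competing lists that asynchrony creates, by destroying lists from the root downward and forcing free agents to climb the surviving list from its root before appending.

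Second, the coupling: your ``full reset on any ordering change'' is unsound because $P_{GHK}$ is not self-stabilizing. Its correctness rests on global conservation invariants of the form $\sum_{a} c_a[i] = \sum_{a} w(s_a[i])$ together with $|w(s_a[i])-c_a[i]|\leq 1$. Agents reach their final labels at different times, so an agent that has already exchanged ``tokens'' with its group and then unilaterally reinitializes destroys the balance for the agents that remain; after the last reset the system is \emph{not} in a configuration reachable from any valid initial configuration of $P_{GHK}$, so your ``staggered start times'' argument does not apply, and being ``maximally conservative'' with resets only makes the imbalance worse. The paper instead forbids an unstable agent (one whose label $l_a$ differs from its assigned $d_a$) from reinitializing until it has first driven $c_a[i]=w(s_a[i])$ for every bit $i$ through specially restricted interactions, so that its departure from and re-entry into a group changes both sides of the invariant by the same amount; an induction on the number of unbalanced bits then shows every unstable agent eventually becomes stable. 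You also leave the extension from complete to general connected graphs untreated; the paper handles it by having interacting agents swap states so that they effectively travel along the graph, and uses global fairness to realize every pairwise meeting.
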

Furthermore, we prove that $k^2-k$ states per node are necessary (Theorem
\ref{thm:lower}).

\subsubsection{Insights on the Ordering Problem.}

The main idea for solving the ordering problem is to have some agents form a linked list, where each node is a single agent representing one of the initial colors. 
The fairness property of the scheduler allows for an \emph{adversarial} kind of asynchronicity in how agents' interactions take place.
Because of this distributed nature of the problem, (temporary) creation of multiple linked lists cannot be avoided. 
Thus, it is necessary to devise a way to eliminate multiple linked lists, whenever more than one of them are detected. 
We achieve this goal by having agents from one of the linked lists  \emph{leave} it; also, as soon as these leaving agents interact with their successor or predecessor in their former list, they force them to leave the list as well, thus propagating the removal process until the entire list gets destroyed. 

On the other hand, in order to form the linked list, the simple idea of having \emph{removed} agents appending themselves to an existing linked list does not work. One of the issues with this naive approach is that a free agent $u$ may interact with the last agent $v$ of a list $\ell$ which is in the process of being destroyed, but the removal process in $\ell$ may still not have reached $v$. 
Our approach to resolve this latter issue consists of, firstly, forcing the destruction process of a linked lists to start from the first agents of the lists, and secondly, forcing free agents to attach to an existing list by \emph{climbing it up} from its first agent and appending themselves to its end once they have traversed it all. This way, by the time that there is only one \emph{first agent} $r$ of a linked list (we call such agents \emph{root} agents), we can be sure that all the free agents must follow the linked list starting by the agent $r$, thus avoid extending incomplete linked lists.

\subsection{Model and Basic Definitions}

\subsubsection{Population Protocols} In this work, we consider the communication
model of Populations Protocols~\cite{AADFP06}: the multi-agent system is
represented by a connected graph
$G=(V,E)$ of $n$ nodes/agents,
where each node implements a finite state machine with state
space $\Sigma$. The communication in this model proceeds in discrete steps. 
We remark that, as for asynchronous continuous-time models with
    Poisson transition rates, they can always be mapped to a discrete-time
    model \cite{EFKMT16}.

At each time step, an (oriented) edge is chosen by a certain
\emph{scheduler}, and the two endpoint nodes interact. Furthermore, there is a transition function
$\Gamma:\Sigma\times\Sigma\rightarrow\Sigma\times\Sigma$ that, given an ordered
pair of states $\left(\sigma_{u},\sigma_{v}\right)\in\Sigma\times\Sigma$ for
two interacting nodes $u$ and $v$, returns their new states
$\Gamma\left(\left(\sigma_{u},\sigma_{v}\right)\right)
=\left(\sigma_{u}^{\prime},\sigma_{v}^{\prime}\right)$. We call
\emph{configuration}, and denote it by
$\conf^{(t)}$, the vector whose entry $u$ corresponds to agent $u$'s state
after $t$ time steps. We say
that a configuration $\conf'_1$ is \emph{reachable} from configuration
$\conf'_2$ if there exists a sequence of edges $\edgeseq$ such that if we start from
$\conf'_2$ and we let the nodes interact according to $\edgeseq$, the resulting
configuration is $\conf'_1$.

In recent works, the scheduler in this model is typically assumed to be probabilistic: the edge that is 
selected at each step is determined by a probability distribution on the edges. 
The most general studied scheduler is the \emph{fair scheduler} \cite{AAER07},
which guarantees the following \emph{global fairness property} \cite{ABBS17,BBCS15}.
\begin{defin}
    \label{def:fairness}
    A scheduler is said to be \emph{globally fair}, iff whenever a configuration $\conf$
    appears infinitely often in an infinite execution $\conf^{(1)},
    \conf^{(2)}, \cdots$, also any configuration $\conf'$ reachable from
    $\conf$ appears infinitely often. 
\end{defin} 
Some of our results hold for an even weaker\footnote{Formally, the globally fair scheduler is not a special case of the weak one since, if the activation of an edge does not lead to a different configuration, it can be ignored under a globally fair scheduler. However, if such \emph{useless} activations are ignored, it is easy to see that the globally fair scheduler is a special case of the weak one.} version of scheduler, which satisfies the
\emph{weak fairness property} \cite{BBCS15,gasieniec_deterministic_2016}. 
\begin{defin}
    \label{def:weak}
    A scheduler is said to be \emph{weakly fair}, 
    iff any edge $e \in E$ appears infinitely often in the activation series $e_1, e_2, ... $ .
\end{defin}
%
    Note that any probabilistic scheduler which selects any edge with
    a positive probability, is a globally fair scheduler, in the sense that the
    global fairness property holds with probability 1. 
%
Indeed, the fairness condition for a scheduler may be viewed as an attempt to
capture, in a general way, useful probability-1 properties in a
probability-free model \cite{AAER07}. This is crucially the case when
correctness is required to be deterministic (i.e. the probability of failure
should be 0) \cite{MNRS17,SSG15}. 

We emphasize that our theoretical results concern the \emph{existence} of certain times in the execution of the protocols for which some given properties hold, 
but no general time upper bound is provided, since a fair scheduler can typically delay some edge activation arbitrarily.

\subsubsection{$k$-Plurality Problem.}
Let $G=(V,E)$ be a network of $n$ agents, such that each agent
$v\in V$ initially supports a value in a set of possible values
$C$ of size $k$. We refer to the $k$ input values as \emph{colors}. 
For each
color $c\in C$, denote by $supp(c)$ the set of agents supporting color $c$. 
We further denote $\col v$ as the input color of $v \in V$.
We say that a population protocol solves the $k$-plurality problem if
it reaches any configuration $\conf^{(t)}$, such that for any $t'\geq t$ it
holds that the agents agree on the color with the greatest number of
supporters in the initial configuration $\conf^{(0)}$. 
More formally, there is an \emph{output function} $\Phi:\Sigma \rightarrow C$ such that for 
any $ t'\geq t$ and any agent $u$, $\Phi((\conf^{(t')})_u)$ equals the plurality color. 
If the relative majority is not unique, the agents should
reach agreement on any of the plurality colors.

In this work, we focus on solving the $k$-Plurality
Problem under a fair scheduler with the goal
	of optimizing the state complexity, 
which we denote by $\memory k$.

We emphasize that we do not assume any
non-trivial lower bounds on the support of the initial majority compared to other colors, 
nor that the agents know the size of the network $n$, or that they
know in advance the number of colors $k$. 
We do not make any assumption on the underlying graph other than
 connectedness. 
We remark that the analysis of our protocol $P_{general}$ in Theorem \ref{thm:generalupper} 
holds for strongly connected directed graphs; 
however, for the sake of simplicity, 
we restrict ourselves to the original setting by \cite{ranjbar__sahraei_theory_2014}. 

Crucially, motivated by real-world scenarios such as DNA computing and biological
protocols, we do not even assume that the nodes initially agree on a binary
representations of the colors: they are only able to recognize
whether two colors are equal and to memorize them. 
This latter assumption separates the polynomial state complexity 
of \cite{gasieniec_deterministic_2016} from the exponential state complexity of \cite{ranjbar__sahraei_theory_2014}.

\section{Lower Bound on $\memory k$}

Since the agents need at least to be able to distinguish their initial colors from each other, 
the trivial lower bound $\memory k \geq k$ follows. 
In this section, we show that $\memory k \in \Omega(k^2)$.
\begin{thm}
    \label{thm:lower}
    Any protocol for the $k$-Plurality Problem 
    requires at least $k^2-k$ memory states per agent.
\end{thm}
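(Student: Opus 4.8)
The plan is to argue by contradiction: suppose there is a protocol $P$ solving the $k$-Plurality Problem using $s \le k^2 - k - 1$ states. I will exhibit two initial configurations with different plurality colors that the protocol cannot distinguish, by constructing executions that drive both configurations into a common configuration which appears infinitely often, forcing (via global fairness) a wrong output in one of the two instances. The key degree of freedom is that the agents do \emph{not} initially agree on a binary encoding of the colors; they can only test equality. Hence a protocol using few states must, in effect, ``forget'' the identity of most colors early in the execution, and an adversary controlling the scheduler can exploit which colors get forgotten.

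First I would set up the counting argument. Fix a large population and consider the colors appearing among the agents; think of the state space $\Sigma$ partitioned into two kinds of states: those that still ``carry'' a specific color (there can be at most $k$ such states of each relevant flavour — at least $k$ are needed merely to record the agent's own color) and those that do not. If $s$ is too small, then after a suitable prefix of interactions only a bounded number $t < k$ of distinct colors can remain ``alive'' in the configuration simultaneously — any surviving color-bearing state encodes at most one color, and there are at most $s - k$ ``extra'' slots beyond the $k$ needed to hold the agents' own input colors. Here the precise bookkeeping that yields the threshold $k^2 - k$ is the delicate part: one wants that with $k^2 - k - 1$ states, the agents cannot maintain, for every pair (surviving color, auxiliary role), a distinct state, so by pigeonhole two colors must become indistinguishable in the relevant part of the state.

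Next, I would use this to build the swindle. Start from a configuration $C_A$ in which color $a$ is the (slim) plurality, and a configuration $C_B$ obtained from $C_A$ by relabeling so that color $b \ne a$ is the plurality; arrange the supports so that $a$ and $b$ are exactly the two colors that the protocol is forced to conflate. Run the protocol from $C_A$ along a schedule that reaches some configuration $D$ that the protocol must also reach from $C_B$ (because the only difference between the two runs involves colors the agents can no longer tell apart). Since $P$ solves Plurality, from $C_A$ it eventually stabilizes on output $a$, and from $C_B$ on output $b$; but both executions pass through $D$, and by interleaving we can make a single execution in which $D$ recurs infinitely often, so by Definition~\ref{def:fairness} every configuration reachable from $D$ — including a stable-output-$a$ configuration and a stable-output-$b$ configuration — recurs infinitely often, contradicting the requirement that after some time the output is constant and correct.

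The main obstacle I expect is making the ``indistinguishability'' step rigorous: one must show that the adversary can genuinely merge the histories of the two instances, i.e. that there is a schedule and a configuration $D$ reachable from \emph{both} $C_A$ and $C_B$, despite the fact that early in the execution the agents supporting $a$ versus $b$ occupy distinct states (their own input color). The way around this is to have those agents interact first with the bulk of the population and transition out of their ``private'' color-state into shared auxiliary states before the two runs are compared — and to choose the supports and the graph so that the $\Omega(k^2)$-bound's pigeonhole collision is exactly between $a$ and $b$. Getting the constant right (\,$k^2 - k$ rather than, say, $\tfrac14 k^2$\,) will require carefully counting the roles each agent's state must simultaneously encode — its own color plus, e.g., a ``candidate plurality'' color — which together demand on the order of $k \cdot (k-1)$ states; showing $k^2 - k - 1$ states provably cannot supply all needed (color, role) combinations is the crux of the lower bound.
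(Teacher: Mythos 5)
Your top-level skeleton --- find two initial configurations with different plurality colors from which a common configuration is reachable, then invoke Definition~\ref{def:fairness} to force a wrong, recurring output in one of them --- is exactly the paper's framework, and your fairness endgame is sound. But the entire quantitative content of the theorem, namely why the threshold is $k^2-k$, is absent: you explicitly defer ``the precise bookkeeping'' and ``the crux'' to a pigeonhole over (own color, candidate-plurality color) pairs that a state ``must encode.'' That style of argument is exactly what one cannot make rigorous here, because a population protocol need not store such information locally --- it can be spread across the population --- so there is no sound way to charge $k(k-1)$ roles to individual states. This is a genuine gap, not a presentational one.

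The paper closes it with a global counting argument that never reasons about what a state ``means.'' Fix a color $c^*$ whose output class $\Phi^{-1}(c^*)$ has at most $|\Sigma|/k$ states. Consider all initial configurations on $2x-1$ agents with exactly $x$ agents of color $c^*$ and an \emph{even} number of each other color; there are roughly $x^{k-2}$ of these, while the number of configurations in which every agent sits in $\Phi^{-1}(c^*)$ is only about $x^{|\Sigma|/k-1}$. If $|\Sigma|<k^2-k$ then $|\Sigma|/k-1<k-2$, so for large $x$ two such initial configurations $I_1\neq I_2$ reach a common all-$c^*$-output configuration $D$. Note both $I_1$ and $I_2$ have the \emph{same} plurality $c^*$ --- unlike your $C_A,C_B$ --- and the evenness constraint guarantees some color $c$ with $|\Delta^{I_1}_c-\Delta^{I_2}_c|\ge 2$. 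The final, essential step you are missing is the padding trick: adjoin $x-\Delta^{I_1}_c+1$ fresh agents of color $c$ to both. This flips the plurality to $c$ in one augmented instance but leaves it $c^*$ in the other, while both still reach $D$ together with the untouched new agents (the old activation sequences remain valid). Without this conversion from ``same plurality, confusable'' to ``different plurality, confusable,'' your direct attempt to conflate $a$ and $b$ has no mechanism behind it, and the bound $k^2-k$ is not derived.
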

\begin{proof}
    The high level idea is to employ an indistinguishability argument. That is,
    we show that for any protocol with less than $k^2-k$ states, there must be
    two initial configurations, $\initconf_1$ and $\initconf_2$, with different plurality colors,
    such that a configuration is reachable from both $\initconf_1$
    and $\initconf_2$. Therefore, the protocol must fail in at least one of these two
    initial configurations.
    
    Let $P$ be a protocol that solves the plurality consensus problem with $k$
    initial colors, and let $\Phi : \Sigma \rightarrow C$ be the output function of $P$.
    Define $\finalstates c = \{ \sigma
    \in \Sigma \,|\, \Phi(\sigma) = c\}$. 
    We start by observing that there must be some color $c^* \in C$, such that $|\finalstates{c^*}| \leq |\Sigma| / k$.
    For any initial configuration
    $\initconf$ and color $c$, let $\Delta^{\initconf}_{c}$ be the number of agents in $\initconf$ with
    initial color $c$. 
    \begin{defin}
        For an odd integer $x > 0$, let $\initconfs{c^*}{x}$ be the set of all
        initial configurations $\initconf$, such that $|\initconf| = 2x-1$, $\Delta^{\initconf}_{c^*} = x$
        and for any color $c \neq c^*$, $\Delta^{\initconf}_{c}$ is an even number.
    \end{defin}
    Given that, for the sake of the lower bound,
    we can assume a complete topology, the number of configurations in $\initconfs{c^*}{x}$ is equal to the
    number of ways to put $(x-1)/2$ pair of balls into $k-1$ bins. 
    Therefore, we have
    $
        |\initconfs{c^*}{x}| \geq ( \nicefrac{\frac{x-1}{2}+k-2}{k-2})^{k-2} .
    $
    For each $\initconf \in \initconfs{c^*}{x}$, since the plurality color in $\initconf$ is
    $c^*$, $\initconf$ will reach a configuration that $\Phi$ maps all agents in the configuration to $c^*$. 
    The number of such possible configurations is at most the number of ways to put
    $2x-1$ balls into $|\finalstates{c^*}|$ bins. For a sufficiently large $x$,
    the number of such possible configurations is at most
    $
        (\nicefrac{(2x-1+\frac{|\Sigma|}{k}-1)e}{\frac{|\Sigma|}{k}-1})^{\frac{|\Sigma|}{k}-1}.
    $
    Observe that for $|\Sigma| < k^2-k$ and sufficiently large $x$, the upper
    bound on the number of possible final configurations is less than the lower
    bound on $|\initconfs{c^*}{x}|$. Therefore, there must be two distinct initial configurations 
    $\initconf_1, \initconf_2 \in \initconfs{c^*}{x}$ and a configuration $\conf$ in which all agents are
    mapped to $c^*$, such that $\conf$ is reachable from both $\initconf_1$ and $\initconf_2$, by some activation sequences $T_1$ and $T_2$ respectively.
    By definition of $\initconfs{c^*}{x}$, we have the following observation.
    \begin{observation}
        For each $\initconf, \initconf' \in \initconfs{c^*}{x}$ where $\initconf \neq \initconf' $, there exists a color
        $c$ such that $|\Delta^{\initconf}_{c} - \Delta^{\initconf'}_{c}| \geq 2$.
    \end{observation}
    Let $c$ be the color obtained from Observation 2 when applied to $\initconf_1$ and
    $\initconf_2$. Without loss of generality, assume that $\Delta^{\initconf_1}_{c} \geq
    \Delta^{\initconf_2}_{c} + 2$. Let $\initconf_3$ be an initial configuration with $x -
    \Delta^{\initconf_1}_{c} + 1$ agents, all having initial color $c$. Let us
    consider the two initial configurations $\initconf_4 = \initconf_1 \cup \initconf_3$ and
    $\initconf_5 = \initconf_2 \cup \initconf_3$. Observe that the plurality color in $\initconf_5$ is
    still $c^*$, while the plurality color in $\initconf_4$ is now $c$. Since $T_1$
    and $T_2$ are possible initial sequences of interactions in $\initconf_4$
    and $\initconf_5$ respectively, both $\initconf_4$ and $\initconf_5$ can reach the
    configuration $\conf \cup \initconf_3$. Therefore, a protocol $P$ using only $k^2-k-1$ states can fail
    to distinguish between initial configurations $\initconf_4$ and $\initconf_5$. Hence, $P$ fails to solve the problem on at
    least one initial configuration.
\end{proof}

\section{Upper bound on $\memory k$}

In the following, we present a protocol that solves the problem
with polynomial state complexity; we prove that $\memory k \in O(k^{11})$. 
The
protocol proposed by G\k asieniec et al. [2] solves the problem using a polynomial number of states, under the hypothesis that agents agree on a way to represent each color with a $m$-bit label. 

First, we  present a protocol that constructs such a shared labeling for the input colors (Theorem \ref{thm:ordering}). Then, we combine these two protocols to design a new protocol that solves the $k$-Plurality Problem (Theorem \ref{thm:generalupper}).

\subsection{Protocol for the Ordering Problem}

In the Ordering Problem, each agent $a \in V$ initially obtains its input color $c_a$, from a set of possible colors $C$ of size $k$. 
The goal of the agent is to eventually agree on a bijection between the set of the possible input colors of size $k$, and the integers $\{0,...,k-1\}$. In other words, each agent $a $ eventually gets a \emph{label} $d_a \in \{0, 1, ..., k-1\}$ , such that for any two agents $a$ and $b$, $d_a = d_b$ iff $c_a = c_b$. 
We want to solve the Ordering Problem by means of a protocol which uses as few states as possible.

A weakly fair scheduler activates pairs of agents to interact. 
%
%
We consider the underlying topology of possible interactions to be a complete directed graph. 
We show how to remove such assumption in General Graphs section.

In this section, we prove the following theorem. 
\begin{thm}
    \label{thm:ordering}
    There is a population protocol $P_o$ which solves the Ordering Problem under a weakly fair scheduler, by employing $\bigo(k^4)$
    states per agent. 
\end{thm}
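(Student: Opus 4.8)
The plan is to realize the high-level idea sketched in the introduction: have a designated set of agents form a doubly-linked list of ``color representatives'', one node per distinct color, and show that under any weakly fair scheduler the system eventually stabilizes to a single such list, from which every agent can read off a consistent label in $\{0,\dots,k-1\}$. Concretely, I would equip each agent with a state consisting of: a \emph{role} field (one of \textsf{free}, \textsf{in-list}, \textsf{root}, \textsf{removing}), its own input color $c_a$, a \emph{predecessor color} and a \emph{successor color} (each in $C\cup\{\bot\}$, so the links are stored by color rather than by identity), and a \emph{position label} in $\{0,\dots,k-1\}$ together with, say, a small \emph{phase} counter for the climbing/removal bookkeeping. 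Since each of these fields ranges over a set of size $O(k)$ and there are a constant number of them, the state space has size $O(k^c)$ for a fixed $c$; the content of the proof is to design the transition rules so that $c\le 4$ suffices and so that correctness holds.

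The argument then has three parts. \emph{(1) Safety of a single list.} First I would show that if at some time there is exactly one \textsf{root} agent and the list it heads is ``well-formed'' (the predecessor/successor color pointers are mutually consistent, every color on the list is distinct, all other color-representative slots are either empty or being filled by climbing \textsf{free} agents, and no \textsf{removing} agent exists), then this property is an invariant and, moreover, every weakly fair execution from such a configuration reaches a configuration in which the list contains exactly one representative of every color present in the population — free agents climb from the root, and the first time a climber reaches the tail carrying a color not yet on the list it appends itself. Here the key point, emphasized by the authors, is that forcing climbers to traverse from the unique root prevents them from attaching to a partially-destroyed list. \emph{(2) Elimination of multiple lists.} Next I would argue that from an \emph{arbitrary} reachable configuration, the number of \textsf{root} agents can never increase spuriously, and that whenever two or more lists (or malformed fragments) coexist, weak fairness guarantees that the tie-breaking rule (e.g.\ the root with the lexicographically smaller color, or simply any deterministic rule on the roots' colors) fires, turning one root into a \textsf{removing} agent; the removal token then propagates deterministically along that list's successor/predecessor pointers — and here I would use the ``destruction starts at the first agent'' discipline to show the token can never be outrun by a climber attaching to the doomed list — until every agent of that list becomes \textsf{free}. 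Since there are at most $n$ agents and the removal of a list is monotone progress, after finitely many such episodes only one root survives. Combining with part (1), the system converges to the single-list configuration; at that point every agent either is in the list and knows its position label, or is \textsf{free}, knows its own color $c_a$, and can obtain $d_a$ by one more interaction with the list representative of $c_a$ (which stores its label). \emph{(3) Label consistency.} Finally I would check that the labels assigned on the list are a bijection onto $\{0,\dots,k'-1\}$ where $k'\le k$ is the number of colors actually present, which is exactly what the Ordering Problem asks (the statement allows $k$ to be an upper bound on the number of colors).

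The main obstacle will be part (2) together with the interface between the two processes: a \textsf{free} agent climbing what it believes is the surviving list may in fact be climbing a list that is concurrently being destroyed, and the destruction token may not yet have reached the node the climber is looking at. The discipline of starting destruction at the root and of having climbers always start at \emph{a} root only helps once there is a \emph{unique} root; before that, I must ensure that a climber on a doomed list is itself turned \textsf{free} before it can corrupt the surviving list — this is exactly the scenario the authors flag, and resolving it cleanly is what dictates the bookkeeping fields (and hence the $O(k^4)$ rather than, say, $O(k^2)$ bound). I would handle it by making a climber that reaches a node whose role is \textsf{removing}, or whose pointers are inconsistent, immediately revert to \textsf{free} and restart, and by proving a potential-function / well-foundedness statement showing that the pair (number of roots, total length of non-canonical list material) strictly decreases along a carefully chosen fair subsequence of interactions. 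The remaining steps — invariance of well-formedness, termination of a single removal cascade, and the final relabeling — are routine case analyses over the transition function once the state fields are fixed.
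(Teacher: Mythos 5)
Your plan follows the same architecture as the paper's proof (one representative leader per color, a unique root, free agents climbing the list from the root, destruction of spurious lists propagating from their heads), so the overall route is right. But there are two points where the proposal, as written, has genuine gaps rather than merely deferred routine work.

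First, your tie-breaking rule between coexisting roots --- ``the root with the lexicographically smaller color, or simply any deterministic rule on the roots' colors'' --- is not available in this model. The entire point of the paper (and what separates it from the G\k{a}sieniec et al.\ setting) is that agents can only test colors for \emph{equality}; they do not share any representation or ordering of the color set, so the transition function must be invariant under permutations of $C$. No deterministic, color-based rule can break a tie between two roots of distinct colors. The paper resolves this by exploiting the \emph{orientation} of the activated edge: when two roots interact as an ordered pair, the rule designates (say) the responder as the survivor. You need this (or an equivalent source of asymmetry) for your part (2) to get off the ground; without it the number of roots need never decrease.

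Second, the core difficulty is not the destruction cascade but the \emph{construction} step: showing that, given a consistent partial list, some free leader eventually becomes safe to append. Your potential-function claim that ``(number of roots, total length of non-canonical list material) strictly decreases'' is asserted, not proved, and it is exactly here that the paper does its real work (its Lemma on the existence of a ``good agent''). The obstruction is that a free leader's successor pointer may lead into leftover garbage --- chains of stale predecessor pointers that terminate outside the leader set or close into a cycle --- and one must show by a nested induction (on the number of isolated versus non-isolated leftover leaders, with a case analysis on where the predecessor chain ends) that some agent eventually gets cleared and becomes good. Until you carry out that case analysis, the claim that the single surviving list eventually absorbs every color is unsupported; the rest of your outline (invariance of well-formedness, final relabeling, the $O(k^4)$ state count) is indeed routine once this is in place.
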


We refer the reader to the section Insights on the Difficulty in the Introduction for an overview of the main ideas behind protocol $P_o$.  

\textbf{Memory Organization.}
The state of each agent $a$, encodes the following information:
\begin{enumerate}
\item $c_a$, the initial color, which never changes.
\item $d_a$, the desired value, stored in $\lceil{log_2k}\rceil$ bits.
\item $l_a$, a bit, indicating whether or not $a$ is a leader.
\item $r_a$, a bit, indicating whether or not $a$ is a root.
\item $pre_a$, a color from the set $C$. If $r_a = 0$ and $a$ is on a linked list, then $pre_a$ is the color of the agent preceding $a$ on the linked list. Otherwise $pre_a$ is set to be $c_a$.
\item $suc_a$, a color from the set $C$. If $a$ is on a linked list, $suc_a$ is the color of the agent succeeding $a$ on the linked list (or $c_a$ if $a$ is the last agent in the linked list). Otherwise, $suc_a$ is the color of the agent whom $a$ is following on a linked list, to reach the end of that linked list, or $c_a$ if $a$ is not following a linked list yet.
\end{enumerate}
Thus, the number of states used is at most $8k^4$.

\textbf{Definitions.}
An agent $a$ is called a \emph{leader}, iff $l_a$ is set.
A leader $a$ is called a \emph{root}, iff $r_a$ is set.
A leader $a$ is called \emph{isolated}, iff $a$ is not a root and $pre_a = c_a$.

A linked list of $n$ links, is a sequence of leaders $a_0, a_1, .., a_{n}$, such that only $a_0$ is a root, and $\forall i, 0 < i \leq n: suc_{a_{i-1}} = c_{a_i} \land pre_{a_i} = c_{a_{i-1}}$. 
A linked list is said to be \emph{consistent}, iff none of its agents' information change by any sequence of further activations, except possibly $suc_a$ where $a$ is the last agent on the linked list.

An isolated agent $a$ is a \emph{good} agent, iff $suc_a$ is either $c_a$ or the color of one of the agents of a consistent linked list.

\textbf{Initialization.}
Before the execution of the protocol, each agent sets $d = 0$, $l = 1$, $r = 1$, $pre = c$ and $suc = c$.

\textbf{Transition Function.}
Let us suppose two agents $a, b \in A$ interact, $a \neq b$. The transition function $\Gamma_o$ that updates their states is given by the following Python code, where clear function is for \emph{isolating} an agent. 

\lstinputlisting[
    label={alg:ordering},
    caption={Protocol $P_o$ for the Ordering  Problem.},
    captionpos=b,
    language=Python,
    escapechar=@, 
    breaklines=true,
    basicstyle=\small,
    ]{./prot.tex}

As seen above, there are 11 rules. The rules are defined for directed pair interactions, but can easily be modified to handle the undirected-interaction case.

\subsubsection{Proof of Theorem \ref{thm:ordering}.} 

We now prove the correctness of Protocol $P_o$ (Algorithm \ref{alg:ordering}). 
We have the following. 

\begin{lemma}
    \label{lemma:orderinglead}
    After some number of activations $T$, in each nonempty set $supp(c)$ of agents, only one is a leader, and among all leaders only one is a root. After such configuration is reached, the leader and root bits of all agents will never change.
\end{lemma}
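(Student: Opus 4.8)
The plan is to track two families of bounded, non-negative integer potentials under the transition function $\Gamma_o$: for each colour $c$, the leader count $L_c(t) := |\{a \in supp(c) : l_a = 1 \text{ at time } t\}|$, and the global root count $R(t) := |\{a \in V : r_a = 1 \text{ at time } t\}|$. I will show that each of these is non-increasing along every execution, that it can never drop to $0$, and that weak fairness forces it to strictly decrease whenever it exceeds $1$; hence each stabilises at the value $1$ after finitely many activations, which yields the time $T$. The stability statement of the lemma will then follow because, once $L_c \equiv 1$ for all $c$ and $R \equiv 1$, the guard of every rule capable of flipping a leader or a root bit is no longer satisfiable.

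The two monotonicity facts come from inspecting the $11$ rules of Algorithm \ref{alg:ordering}: \textbf{(i)} no rule sets $l_a$ from $0$ to $1$, and the unique rule that sets some $l_a$ from $1$ to $0$ fires only when two agents of the same colour, both leaders, interact, demoting exactly one of them (via \texttt{clear} that agent also loses its root bit, and the rule is arranged so that a root among the two participants is spared); \textbf{(ii)} no rule sets $r_a$ from $0$ to $1$, and $r_a = 1$ entails $l_a = 1$ throughout, and the only rule that turns a root into a non-root fires when two roots interact, demoting exactly one of them. From (i), each $L_c$ is non-increasing, starts at $|supp(c)| \ge 1$, and stays $\ge 1$ because the demotion always spares one of the two leaders. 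If $L_c$ were to stabilise at some value $m \ge 2$, then from that point on the set $S$ of colour-$c$ leaders would be fixed (no promotions occur, and a demotion would lower $L_c$); picking $u, v \in S$, weak fairness (Definition \ref{def:weak}) activates the edge $(u,v)$ at some later step, at which the same-colour-leaders rule fires and demotes one of $u, v$, contradicting that $S$ is fixed. Hence every $L_c$ stabilises at $1$; let $T_1$ be a time by which this has happened for all (finitely many) nonempty colour classes. The same reasoning applied to $R$ via (ii) --- using that an interaction of two roots always triggers a demotion, and that the leader-demoting rule, when one of the two participants is the unique root, demotes the other --- shows that $R$ is non-increasing, stays $\ge 1$, and cannot stabilise above $1$; so there is a time $T \ge T_1$ after which $R \equiv 1$ while still $L_c \equiv 1$ on every nonempty class.

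It remains to check that no leader or root bit changes after $T$. By (i), a leader bit can change only during an interaction of two same-colour leaders; after $T$ each colour class has a single leader, so this guard is never met, and no rule creates leaders. By (ii), a root bit can change only during an interaction of two roots; after $T$ there is a single root, so this guard is never met, and no rule creates roots. Hence leader and root bits are frozen from $T$ onwards, each nonempty $supp(c)$ keeping its unique leader and the single root being permanent. I expect the main obstacle to be the rule-by-rule verification of (i) and (ii): concretely, confirming that leadership and roothood are genuinely monotone --- for instance, that a linked list whose destruction has completed is never re-seeded by promoting some isolated leader back to a root --- and that, in the priority cascade of rules, no other rule shadows the demotion when two same-colour leaders, or two roots, meet.
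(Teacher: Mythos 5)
Your proof is correct and follows essentially the same route as the paper's: both rely on the monotonicity of the leader and root bits (never set from false to true), on Rule~1 demoting one of two same-colour leaders while sparing a root, and on Rule~3 resolving two distinct roots, with fairness forcing convergence to one leader per colour and a unique root. Your version merely makes the bookkeeping more explicit (the potentials $L_c$ and $R$ and the explicit appeal to weak fairness), and correctly flags that the remaining work is the rule-by-rule verification, which the paper also only asserts.
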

\begin{proof}
    The protocol never changes a leader or root bit from False to True.
    When two leaders with the same color interact, one of them clears its leader bit, due to Rule 1 (notice that the direction of interaction is relevant here).
    Therefore, the number of leaders decreases until no two leaders have the same color, after which no leader bit of any agent ever changes. 
    Afterwards, when two roots interact, they now have different colors and only one of them remains a root, due to Rule 3.
    Furthermore, note that when two leaders interact where
    one of them is a root, the one who remains a leader is also a root, due to Rule 1.
    Hence, we conclude that there is always a root, and after some number of interactions the root must be unique, after which no root bit of any agent ever changes. 
\end{proof}

Let $T$ be the number of activations described in Lemma \ref{lemma:orderinglead}. Let $L$ be the set of leaders after $T$ activations and let $q \in L$ represent the unique root. 
We now prove, by using induction on $n$, that for any integer $n, 1 \leq n < |L|$, after some number of activations $t_n \geq T$, there is a consistent linked list of $n$ links whose agents belong to $L$. 

From now on, we may refer to a leader $a \in L$ by its color $c_a$. 
Observe that, since there is only one root and no two leaders have the same color, any linked list that exists after $T$ activations, is a consistent one.


\textbf{Base case $n = 1$.}
If after $T$ activations, $q$ does not have a successor (i.e. $suc_q = c_q$), then as soon as $q$ interacts with another leader, it makes the other one its successor, due to Rule 5. 
Otherwise, as soon as $q$ interacts with $suc_q$, by Rule 5 we can be sure that they form a consistent linked list of 1 link.

\textbf{Induction step.} Suppose that $n+1 < |L|$ and after $t_{n} \geq T$ activations, a consistent linked list of $n$ links exists. Let $v_1$, $v_2$, ..., $v_n$ denote the agents succeeding agent $q$ on the linked list, respectively. Suppose $suc_{v_n} \neq c_{v_n}$, and let $p$ denote $suc_{v_n}$. Consider the first interaction between $v_n$ and $p$, after $t_{n}$ activations. After such interaction, if $pre_p = c_{v_n}$ and $d_p = d_{v_n} + 1$, we have a consistent linked list of $n+1$ links; otherwise, Rule $7$ or Rule $9$ executes and $suc_{v_n} = c_{v_n}$. 
We now assume $suc_{v_n} = c_{v_n}$.

We prove the following.
\begin{lemma}
    \label{lemma:goodagent}
    Suppose some number of activation $T' \geq T$ has passed, 
    and $q$, $v_1$, $v_2$, ..., $v_n$ form a consistent linked list 
    of $n$ links where $n + 1 < |L|$, and also $suc_{v_n} = c_{v_n}$. 
    After some more activations, a good agent exists.
\end{lemma}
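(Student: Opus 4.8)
The plan is to produce, via fairness, an isolated leader whose $suc$ pointer points into the consistent linked list $q, v_1, \dots, v_n$, and then argue that repeated interactions with the members of that list force its $suc$ to "climb up" until it becomes a good agent. First I would observe that since $n+1 < |L|$, there is at least one leader $w$ that does not belong to the list $\{q, v_1, \dots, v_n\}$. I would then appeal to the weak fairness property to schedule an interaction in which $w$ (or some leader eventually playing its role) gets \emph{cleared/isolated}: concretely, I want to reach a configuration where some leader $a \notin \{q,\dots,v_n\}$ satisfies $r_a = 0$ and $pre_a = c_a$, i.e.\ $a$ is isolated. The delicate point is that the clearing rules (the ones invoked by the \texttt{clear} function in Rules 7, 9, etc.) are exactly what takes a leader off a list or off a following-chain; since the list is consistent and $suc_{v_n}=c_{v_n}$, no further activation can disturb $q,\dots,v_n$, so any leader outside the list that is currently on some (necessarily inconsistent, because $q$ is the unique root) list or following-chain will eventually be isolated — and a leader that is already isolated is fine as our starting candidate.

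Next I would track the evolution of $suc_a$ for this isolated leader $a$. The invariant to maintain is: $a$ stays isolated (nothing in Rules 1--11 turns an isolated non-root leader back into a root or gives it a predecessor without also putting it genuinely on a list behind the unique root) and $suc_a$ is always either $c_a$ or the color of some agent of the list $q,\dots,v_n$. I would prove this invariant by a case analysis over which rule fires when $a$ interacts with another agent $b$: if $b$ is the root $q$, the relevant rule resets $suc_a$ to $c_q$ (start following from the root); if $b = v_i$ for some $i<n$ and $suc_a = c_{v_{i-1}}$ (or $c_q$ when $i=1$), the following-advance rule sets $suc_a = c_{v_i}$; if $b = v_n$ and $suc_a = c_{v_{n-1}}$, then $a$ appends itself, i.e.\ $suc_{v_n}$ becomes $c_a$ — but that contradicts $suc_{v_n} = c_{v_n}$ being preserved, so in fact this is the case that makes the list grow, which is excluded by hypothesis; hence instead the rule sets $suc_a = c_{v_n}$ and leaves the list alone, and now $suc_a$ equals the color of the last agent of a consistent linked list, so $a$ is a good agent. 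In every other interaction ($b$ a non-leader, or a leader not on the list, or a mismatched list member) the relevant climbing rule either leaves $suc_a$ unchanged or resets it to $c_a$, keeping the invariant.

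Finally I would close the argument using weak fairness to guarantee that the "good" event actually occurs: the edges $(a,q), (a,v_1), \dots, (a,v_n)$ each appear infinitely often, so starting from the isolated configuration of $a$ the scheduler must eventually activate $(a,q)$ (setting $suc_a = c_q$), then later $(a,v_1)$, then $(a,v_2)$, and so on up to $(a,v_n)$, with no intervening interaction able to do more than harmlessly reset $suc_a$ to $c_a$ (from which we simply restart the climb at the next $(a,q)$ activation) — since these resets can only happen finitely often before fairness forces the full uninterrupted climb, or more carefully, since the configuration with $a$ isolated and $suc_a = c_a$ recurs, a clean run of the climbing sequence is eventually realized. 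At that point $a$ is a good agent, as required. The main obstacle I anticipate is the bookkeeping in the case analysis: making sure that \emph{no} rule among the eleven can, when one endpoint is our isolated leader $a$ and the other is a member of the consistent list, either (i) damage the consistency of $q,\dots,v_n$, or (ii) move $suc_a$ to a color outside $\{c_q, c_{v_1},\dots,c_{v_n}, c_a\}$; verifying this requires reading each rule's guard carefully and using that $d$-labels along the consistent list are $0,1,\dots,n$ and the list is consistent.
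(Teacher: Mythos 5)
There is a genuine gap, and it comes from misreading what the lemma asks for. A \emph{good} agent is defined as an isolated agent whose $suc$ is \emph{either} $c_a$ \emph{or} the color of an agent of a consistent linked list. In particular, any agent that gets \emph{cleared} (isolated with $suc$ reset to its own color) is immediately good --- no ``climbing'' is required. The entire second and third paragraphs of your plan, where you track $suc_a$ advancing along $q, v_1, \dots, v_n$, are proving something that is not needed for this lemma; that climbing argument belongs to the step \emph{after} the lemma in the paper (where a good agent traverses the list and appends itself to extend it from $n$ to $n+1$ links). Meanwhile, the step you dispose of in one sentence --- ``any leader outside the list that is currently on some (necessarily inconsistent) list or following-chain will eventually be isolated'' --- is precisely the hard content of the lemma, and weak fairness alone does not deliver it: you must show that the transition rules actually fire a clearing on the leftover leaders, i.e.\ rule out that the leaders in $M = L \setminus \{q, v_1, \dots, v_n\}$ sit in a configuration (e.g.\ pointing at one another through their $pre$ fields) in which no rule ever clears anyone and no agent ever becomes good.

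The paper closes exactly this gap with a two-level induction. It splits $M$ into the isolated agents $M_1$ and the rest $M_2$. If all of $M$ is isolated, it takes a non-good isolated $a$, lets it meet $b = suc_a$, and Rule 4 clears $a$ (making it good) unless one of them is already good. Otherwise it inducts on $|M_1|$: starting from $a_0 \in M_2$ it follows the chain $a_1 = pre_{a_0}, a_2 = pre_{a_1}, \dots$ until it either leaves $M$ (then Rule 8 clears $a_{i-1}$ when it meets $a_i$, since $a_i$ belongs to the unique consistent list and cannot have $a_{i-1}$ as successor), hits a cycle (then some pair on the cycle triggers Rule 8 or 9 and an agent is cleared), or reaches an agent of $M_1$ (then either an agent is cleared, or $|M_1|$ strictly decreases and the induction hypothesis applies). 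This predecessor-chain/cycle analysis is the substitute for your fairness appeal, and without it --- or something equivalent --- the proof does not go through. A secondary confusion: your claim that the case ``$a$ appends itself to $v_n$'' is ``excluded by hypothesis'' is wrong; the hypothesis $suc_{v_n} = c_{v_n}$ describes the state at time $T'$, and consistency explicitly permits $suc_{v_n}$ to change later --- the list growing is the intended outcome of the surrounding induction, not a contradiction.
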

\begin{proof}[Proof of Lemma \ref{lemma:goodagent}]
    If a good agent already exists after $T'$ activations, the claim is proved. 
    Therefore, we assume that no good agent exists right after $T'$ activations. 
    Define $M = L \setminus \{q, v_1, v_2, ..., v_n\}$. 
    Let $M_1$ be the set of agents in $M$ which are isolated and $M_2 = M \setminus M_1$. 
    It follows from the hypothesis that $|M| > 0$. 
    First, we prove the lemma assuming $|M_1| = |M|$. 
    Then, we prove the other cases by induction on the size of the $|M_1|$.
    
    \textbf{Case $|M_1| = |M|$.} 
    From the definitions above, it follows that $|M_1| = |M|$ implies $|M_1| > 0$ and $|M_2| = 0$. 
    Let $a \in M_1$ be an agent. 
    Since $a$ is not a good agent, we have $suc_a \neq c_a$. 
    Let $b$ denote $suc_a$. 
    Consider the first moment after $t_k$ activations in which $a$ and $b$ interact. 
    If one of $a$ or $b$ became a good agent, we are done. 
    Otherwise, Rule 4 executes and $a$ is cleared. 
    Thus, after some number of activations $t \geq T'$, $a$ is a good agent.
    
    We now use induction on $|M_1| = 0, 1, ..., |M|-1$ to prove the remaining cases.
    
    \textbf{Base case $|M_1| = 0$.} 
    Consider an agent $a_0 \in M_2$. 
    Let agent $a_1$ be $prev_{a_0}$. 
    If $a_1 \in M_2$, let agent $a_2$ be $prev_{a_1}$. 
    We repeat this process until we reach some agent $a_i$ such that either $a_i \notin M_2$ or $a_i = a_j$ for some $j < i$. Since $|M_1| = 0$, if $a_i \notin M_2$ then $a_i \notin M$ and $a_{i-1}$ is cleared by the time it and $a_i$ interact, due to Rule 8. Note that the only way $M_1$ gets new members, is that an agent becomes cleared, which implies the existence of a good agent. 
    Otherwise, $a_i = a_j$ for some $j < i$, which means that we incur in a cycle when we follow the $prev$ values of agents. In particular, there will be a pair of agents on this cycle such that when they interact (if they are not already cleared by that time), Rule 8 or Rule 9 executes and an agent is cleared. 
    Therefore, after some activations $t \geq t_n$, an agent is cleared and a good agent exists.
    
    \textbf{Induction step.} 
    Suppose $h < |M|$ and the statement holds for all $|M_1| < h$. 
    We show that it also holds for $|M_1| = h$. 
    Again, we repeat the process described in the base case. 
    This time, we stop at agent $a_i$ if any of the following holds:
    \textit{i)} $a_i \notin M$,
    \textit{ii)} $a_i = a_j$ for some $j < i$, or
    \textit{iii)} $a_i \in M_1$. 
    
    The first two cases follow from the same argument as in the base case. 
    In the third case, suppose that agents $a_{i-1}$ and $a_i$ interact at time $\interactiontime$. 
    If an agent $a \in M$ has been cleared by time $\interactiontime$, then we have a good agent. 
    Otherwise, if no agents has been cleared between $t_n$ activations and $\interactiontime$
    and, by time $\interactiontime$, agent $a_i$ is not in $M_1$ anymore, 
    then the size of $M_1$ has been reduced by at least 1. 
    The latter event implies that, by induction hypothesis, 
    after some more activations either good agent exists or, by Rule 8, 
    an interaction between $a_{i-1}$ and $a_i$ clears $a_i$. 
    Thus, eventually a good agent exists.
\end{proof}

Let $a$ be a good agent, whose existence is guaranteed by Lemma \ref{lemma:goodagent}. 
The only activation that changes the state of $a$, is an interaction with $suc_a$ (or $q$ when $suc_a = c_a$). 
If $suc_a$ is not the last agent of the linked list, it will be updated to be its successor (or $v_1$ when $suc_a = c_a$). 
Therefore, after at most $n$ such activations, $a$ interacts with the last agent on the linked list, 
and since $suc_{v_n} = c_{v_n}$, it is added to the linked list
(provided that the linked list have not already increased its size by attaching another good agent to it). 
Therefore, after some activations $t_{n+1} \geq t_{n} \geq T$, 
a consistent linked list of $n+1$ links is formed, 
concluding the induction.

We have thus proved that, after some number of activations $t_{|L|-1}$, 
there is a consistent linked list that includes all agents from $L$. 
Let $a$ be the last agent on the linked list. 
Rule 7 ensures that after some activations, $suc_a = c_a$. 
Also, after some activations all non-leader agents copy the assigned number of their leader. 
Afterwards, the whole system stabilizes and no agent changes its state, concluding the proof of the theorem.

As a final remark notice that, for an agent $a$, there may be sequences of edge activations that lead the assigned label $d_a$ to reach a value which grows as a function of $n$ before stabilizing. 
We thus assume that the variable $d_a$ \emph{overflows} when exceeding the largest number it can store, and gets set back to $0$.
Notice that $d_a$ is guaranteed to be large enough to store $k$. 
It is straightforward to verify that this latter assumption does not affect our analysis above. 
\qed

\section{Plurality Protocol with $\bigo(k^{11})$ States}

We now come back to the original problem by proving the following result. 

\begin{thm}
    \label{thm:pluralityclique}
    There is a population protocol $\pclique$ which solves the $k$-Plurality Problem under a weakly fair scheduler, when the underlying graph is complete, by employing $\bigo(k^{11})$ states per agent. 
\end{thm}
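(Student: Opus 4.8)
The plan is to combine the ordering protocol $P_o$ from Theorem~\ref{thm:ordering} with the plurality protocol of G\k asieniec et al., running both in parallel within a single compound state space. Each agent's state is a pair: one component runs $P_o$ (using $\bigo(k^4)$ states), and the other runs a version of the G\k asieniec et al. plurality protocol, which assumes a shared $\lceil\log_2 k\rceil$-bit labeling of the colors and uses $\bigo(k^6)$ states. The label $d_a$ maintained by the $P_o$-component is fed, at every interaction, as the ``agreed integer representation'' of $c_a$ into the plurality component. Since the compound state space is the product, its size is $\bigo(k^4)\cdot\bigo(k^6)=\bigo(k^{10})$ --- so I would need to account for the extra $\bigo(k)$ factor, presumably coming from storing the color $c_a$ itself explicitly (which the plurality protocol must reference to break ties or re-initialize), giving $\bigo(k^{11})$ states overall; I would verify the exact bookkeeping of which values must be stored redundantly.

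The correctness argument proceeds in two phases. By Theorem~\ref{thm:ordering}, there is a time $t^\star$ after which every agent $a$ holds a fixed label $d_a\in\{0,\dots,k-1\}$ with $d_a=d_b \iff c_a=c_b$, and this labeling never changes thereafter. The key step is to show that the plurality component, even though it may have run on ``garbage'' (inconsistent, changing labels) before $t^\star$, recovers once the labeling stabilizes. Concretely, I would define the plurality component so that whenever an agent detects a change in its own label $d_a$ (or, more robustly, at every interaction), it re-initializes its plurality state to the canonical ``input'' state associated with label $d_a$. After $t^\star$ no further label changes occur, so from some time $t^{\star\star}\geq t^\star$ onward the plurality component is running a genuine execution of the G\k asieniec et al. protocol from a legitimate initial configuration on the complete graph with the correct labels. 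Here I must invoke the fact that the scheduler is (weakly, hence also for our purposes globally) fair, and that a suffix of a fair execution is still fair, so the convergence guarantee of the G\k asieniec et al. protocol applies to this suffix. Then the plurality protocol converges to the color whose label has maximum support, which is exactly the plurality color of $\conf^{(0)}$ since the labeling is a bijection respecting colors. The output function $\Phi$ of $\pclique$ reads off the plurality component's output label and maps it back to the corresponding color (recoverable since each agent stores $c_a$ and $d_a$ consistently on a stabilized list, or more directly since each agent knows its own $(c_a,d_a)$ pair and the output is reported as a label which every agent has agreed upon).

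The main obstacle I anticipate is the interleaving of the two subprotocols before stabilization: one must ensure that the premature, possibly chaotic behavior of the plurality component does not feed back into and corrupt the ordering component (this is immediate if the two components are genuinely independent --- $P_o$'s transition function ignores the plurality component entirely), and conversely that the re-initialization mechanism for the plurality component is triggered correctly so that no ``stale'' state survives past $t^\star$. A clean way to handle the latter is to make the plurality component's transition deterministically a function of the incoming label pair $(d_a,d_b)$ together with the current plurality sub-states, designed so that a single interaction after both labels have stabilized already places both agents into states reachable in a legitimate G\k asieniec et al. execution --- for instance, by having each agent carry a copy of ``the label it was last initialized with'' and forcing a reset whenever this disagrees with the current $d_a$. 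I would also need to double-check that the G\k asieniec et al. protocol, as stated for a weakly fair scheduler, indeed tolerates an arbitrary finite prefix of adversarial interactions provided the state is reset to a valid configuration afterward --- essentially a ``self-stabilization from a reachable-after-reset configuration'' property --- and to state precisely the lemma that a configuration reached by $\pclique$ at time $t^{\star\star}$ is a valid initial configuration of the embedded plurality protocol. Finally, assembling these pieces with Theorem~\ref{thm:ordering} yields that $\pclique$ reaches a configuration after which all agents output the plurality color, completing the proof; Theorem~\ref{thm:generalupper} then follows by the reduction from general strongly connected graphs to the complete graph described in the General Graphs section.
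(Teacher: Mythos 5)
Your overall architecture (run $P_o$ and the G\k asieniec et al.\ protocol $P_r$ in parallel, let the stabilized labels $d_a$ feed the plurality component) matches the paper's, but your mechanism for recovering from pre-stabilization ``garbage'' has a genuine gap, and it is precisely the point you flag as something to ``double-check.'' You propose that an agent simply re-initialize its plurality state whenever its label $d_a$ changes, and you conclude that after the labels stabilize at time $t^\star$ the plurality component is running ``a genuine execution \dots\ from a legitimate initial configuration.'' This is false: resets happen at different times for different agents (and some agents never reset at all), so at no point after $t^\star$ is the global configuration of the plurality component a legitimate initial configuration or even one reachable from a legitimate one. The protocol of G\k asieniec et al.\ is not self-stabilizing; its correctness rests on conservation invariants of the form $\sum_{a \in L_x} c_a[i] = \sum_{a \in L_x} w(s_a[i])$ together with $|w(s_a[i]) - c_a[i]| \le 1$, which are properties of the whole population, not of a single agent. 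When one agent unilaterally resets after having already exchanged ``charge'' with its neighbors, the sum on one side of the invariant changes while the other does not, and the surviving discrepancy is permanent: no suffix of the execution can be re-interpreted as a clean run of $P_r$. Your fallback of resetting at every interaction is worse, since agents cannot detect when $P_o$ has converged and so would never stop resetting.

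The paper's proof works around exactly this obstacle. An agent $a$ with $l_a \neq d_a$ is declared \emph{unstable} and is \emph{not} allowed to re-initialize until it has first cancelled all of its outstanding discrepancies, i.e.\ until $c_a[i] = w(s_a[i])$ for every bit $i$ (the condition $L_a \cup G_a = \emptyset$). To reach that condition it is paired, within its group $L_x$, with agents carrying the opposite discrepancy, so that each cancellation preserves both invariants; only then does it swap in the new label and re-initialize, which subtracts equal quantities from both sides of the conservation equation. The termination of this unwinding phase is itself nontrivial and is handled by Lemma~\ref{lem:upper} and Fact~\ref{fact:unstable}, via an induction on the number of indices $i$ with $c_a[i] \neq w(s_a[i])$. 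None of this machinery is present in your proposal, and without it the correctness claim for the suffix after $t^\star$ does not go through.
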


Recall that, initially, each agent $a \in V$ obtains its initial color which we shall rename to $ic_a$, from a set of possible colors $C$ of size $k$. Let $m$ be $\lceil{log_2k}\rceil$.
%
For the sake of simplicity, in this section we consider the underlying topology of possible interactions to be a complete graph. 
We show how to remove such assumption in Section General Graphs, thus proving Theorem \ref{thm:generalupper}.
A weak scheduler activates pairs of agents to interact. 
The goal is for all agents to agree on the plurality color, using as few states as possible.

\subsubsection{Main Intuition behind $\pclique$.}
The protocol proposed by Gasieniec et al. \cite{gasieniec_deterministic_2016}, which we shall call $P_r$, 
solves this problem under the hypothesis that each color is denoted by a never changing $m$-bit label, 
such that each bit is either -1 or 1, rather than the more standard 0 or 1. 
We adopt the same notation and assume that the ordering protocol $P_o$ stores the $d$ values in such format. 
The idea is to run both protocols, $P_o$ and $P_r$, in parallel and, whenever for an agent $a$, $l_a$ and $d_a$ are not equal, we ensure that after some activations, $\forall i, 0 \leq i < m: c_a[i] = w(s_a[i])$. 
When the latter condition holds, we can set $l_a$ to be $d_a$ and reinitialize $c_a$ and $s_a$ according to initialization of $P_r$. 

Notice that, since every agent is required to eventually learn the label of the plurality color, 
each agent also stores a color that corresponds to that label.
 
\textbf{Memory Organization.}
The state of each agent $a$, encodes the following information:
\begin{enumerate}
    \item $ic_a, d_a, ld_a, rt_a, pre_a$ and $suc_a$, as described for $P_o$ (where $c_a$, $l_a$ and $r_a$ in $P_o$ are renamed to $ic_a$, $ld_a$ and $rt_a$, respectively),
    \item $l_a, c_a, s_a$, as described in $P_r$, and
    \item $ans_a$, a color from the set $C$, which holds the relative majority color.
\end{enumerate}
The number of states used is at most $8k^{11}$.

\textbf{Definitions.}
An agent $a$ is called \emph{unstable}, iff $l_a \neq d_a$.
For each $i, 0 \leq i < m$, and for each $x$ that is an $i$-bit number with bit values either -1 or 1, let us define $L_x$ to be the set of all agents $a$ such that the first $i$ bits of $l_a$ are equal to $x$.

\textbf{Initialization.}
Before the execution of the protocol, for each agent $a$, 
the variables $d_a, ld_a, rt_a, pre_a$ and $suc_a$ are initialized according to $P_o$. 
Note that, instead of all bits set to 0, $d_a$ has all bits set to -1. 
Moreover, we set $l_a = d_a$ and initialize $c_a$ and $s_a$ according to $P_r$. 
$ans_a$ is set to be $ic_a$.

\textbf{Transition Function.}
We now define the transition function $\gammaclique$. 
Let us suppose that two agents $a$ and $b$ are activated, with $a \neq b$. 
Let $\Gamma_o$ be the transition function of $P_o$, and $\Gamma_r$ be the transition function of $P_r$. 

First, the values related to $P_o$ are updated according to $\Gamma_o$. 
If $l_a$ or $l_b$ is the label of the winning color in $P_2(0)$ (as described in $P_r$), 
let us set $ans_a = ic_a$ or $ans_a = ic_b$, respectively. 
Afterwards, 
\begin{enumerate}
\item If $d_a = l_a$ and $d_b = l_b$, we update the values related to $P_r$ according to $\Gamma_r$.
\item If $d_a \neq l_a$, let $L_a = \{i | 0 \leq i < m \land c_a[i] < w(s_a[i])\}$, 
and let $G_a = \{i | 0 \leq i < m \land c_a[i] > w(s_a[i])\}$. 
Let $L_b$ and $G_b$ be analogously defined. 
If $L_a \cup G_a = \emptyset$, we set $l_a = d_a$ and initialize 
$c_a$ and $s_a$ according to initialization rule of $P_r$. 
Otherwise, let $M$ be $ L_a \cup G_a$ if $d_b = l_b$, $(L_a \cup G_a) \cap (L_b \cup G_b)$ otherwise. 
For each $i \in M$, if $l_a$ and $l_b$ share the same $i$-bit prefix, we have that
\begin{enumerate}
    \item If $i \in (L_a \cap G_b) \cup (G_a \cap L_b)$, set $s_a[i] = [c_a[i]]$ and $s_b[i] = [c_b[i]]$,
    \item Otherwise, update $s_a[i]$ and $s_b[i]$ according to $\Gamma_r$.
\end{enumerate}
If $d_b = l_b$, we update the array $c_b$ and, if needed, 
we propagate the changes as in $\Gamma_r$.
\end{enumerate}

\subsubsection{Proof of Theorem \ref{thm:pluralityclique}.}

First, we prove that for each $i, 0 \leq i < m$, and for each $x$ that is an $i$-bit number with bit values equal to either -1 or 1, the two invariants of $P_2(i)$ hold for $L_x$, that is
\begin{enumerate}
    \item $\sum_{a \in L_x} c_a[i] = \sum_{a \in L_x} w(s[i])$, and
    \item $\forall a \in L_x, |w(s_a[i]) - c_a[i])| \leq 1$.
\end{enumerate}
The interactions in which states are updated according to $P_r$ satisfy the invariants due to the correctness of $P_r$. 
The other interactions can be divided into the following two cases.
\begin{enumerate}
    \item For any agent $a$, if $l_a$ changes and the change includes a bit from the $i$-bit prefix of $l_a$, 
    we know that $c_a[i] = w(s[i])$ by definition of the protocol. 
    Therefore, if $a$ is in $L_x$ before the change, the same value is subtracted from both sides of the first invariant. 
    Otherwise, if $a$ is in $L_x$ after the change, the same value is added to both sides of the first invariant. 
    Moreover, since after the reinitialization of $a$ it is still the case that $c_a[i] = w(s[i])$, the invariants hold.
    
    \item If $a \in L_x$ and $b$ are two agents such that $s_a[i]$ and $s_b[i]$ changes simultaneously, 
    then we know that $b \in L_x$ by definition of the protocol and, 
    because of the second invariant, 
    either $c_a[i] = w(s_a[i]) - 1$ and $c_b[i] = w(s_b[i]) + 1$, 
    or $c_a[i] = w(s_a[i]) + 1$ and $c_b[i] = w(s_b[i]) - 1$. 
    In both cases, $w(s_a[i]) + w(s_b[i])$ remains unchanged after $s_a[i]$ is set to $[c_a[i]]$ and $s_b[i]$ is set to $[c_b[i]]$, 
    so the first invariant holds. 
    It is immediate to check that the second invariant still holds as well.
\end{enumerate}

We proved that the two invariants hold throughout the execution of the protocol.
It follows from the correctness of $P_o$ that after some number of activations $T_o$, for each agent $a$, $d_a$ doesn't change anymore.
\begin{lemma}
    \label{lem:upper}
    After some number of activations $T \geq T_o$, for each agent $a$, $l_a$ equals $d_a$.
\end{lemma}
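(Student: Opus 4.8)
The plan is to argue by contradiction, tracking the set of still‑unstable agents and using the two invariants established above together with the correctness of $P_o$ and weak fairness. By the consequence of the correctness of $P_o$ recalled above, there is a time $T_o$ after which every $d_a$ is frozen. From that point on, the only step that can modify $l_a$ is the reinitialization inside case~2, which is executed only when $d_a\neq l_a$ and sets $l_a:=d_a$; hence, once an agent satisfies $l_a=d_a$ at some time $\ge T_o$ it stays stable forever, and no further reinitialization involving it ever occurs. Thus the set $U^{(t)}$ of unstable agents is non-increasing for $t\ge T_o$, and it suffices to show that $U^{(t)}=\emptyset$ for some $t$, which gives the lemma.

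Suppose not. Then there is a time $T_1\ge T_o$ after which $U^{(t)}$ equals a fixed non-empty set $U^*$ and no agent of $U^*$ is ever reinitialized again. After $T_1$ the following are therefore frozen: all labels $l_a$ (a stable agent never changes $l_a$, and an agent of $U^*$ would do so only upon reinitialization), hence all group memberships $L_x$; and, for every $a\in U^*$, the whole array $c_a$, since $\gammaclique$ modifies $c_a$ only when $a$ is stable. So for $a\in U^*$ the only component that can still evolve is $s_a$, and since $a$ never stabilizes we have $L_a\cup G_a\neq\emptyset$ at every time after $T_1$.

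The core of the argument is a potential argument. For $t\ge T_1$ put $\Psi(t)=\sum_{a\in U^*}|L_a\cup G_a|$, the total number of (agent, coordinate) pairs at which an agent of $U^*$ is misaligned; by the above, $\Psi(t)\ge|U^*|\ge 1$ for all $t\ge T_1$. First I would check that no step ever increases $\Psi$: the pure $\Gamma_r$-steps among stable agents (case~1) do not touch $U^*$; a $\Gamma_r$-update of $s_a[i]$ for $a\in U^*$ happens only for an $i$ that already lies in $L_a\cup G_a$, and by the second invariant it leaves $a$ misaligned by at most $1$ at $i$ and touches no other coordinate of $a$; and a balance move at a coordinate $i$ where $a\in U^*$ is misaligned sets $s_a[i]:=[c_a[i]]$, which by definition of $[\cdot]$ makes $c_a[i]=w(s_a[i])$, removing $i$ from $L_a\cup G_a$ and strictly decreasing $\Psi$. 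It then remains to force a strictly-decreasing balance move to occur infinitely often. For this, fix any $a\in U^*$ and a coordinate $i\in L_a\cup G_a$, and let $L_x$ be the group of $a$ at level~$i$; by the first invariant for $L_x$ together with the second invariant, the quantities $c_b[i]-w(s_b[i])$ over $b\in L_x$ lie in $\{-1,0,1\}$ and sum to $0$, so $L_x$ contains an agent $b$ misaligned at $i$ in the direction opposite to $a$. When such a pair $a,b$ interacts — which, the graph being complete and the scheduler weakly fair, happens after finitely many further steps — coordinate $i$ lies in the index set $M$ (if $b$ is stable then $M=L_a\cup G_a\ni i$; if $b\in U^*$ then $i\in L_b\cup G_b$ as well), the labels share the $i$-bit prefix, and $i\in(L_a\cap G_b)\cup(G_a\cap L_b)$, so the balance move fires and $\Psi$ strictly decreases, contradicting $\Psi\ge 1$.

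The delicate point — and the one I expect to be the main obstacle — is that $a$'s misalignment at $i$ can change \emph{direction} over time through the $\Gamma_r$-updates (the second invariant bounds its magnitude but does not forbid a sign flip), and likewise for the candidate partners $b$, so one cannot simply "name a partner $b$ and wait for $a$ to meet it." What one really needs is that the embedded execution of $P_r$ cannot sustain misalignment among the agents of $U^*$ forever: once $T_1$ has passed, all labels — hence all inputs to $P_r$ — are frozen, and one should argue that the $P_r$-component then behaves as a terminating execution of $P_r$ on a fixed population with some externally frozen counters, after which no $\Gamma_r$-update modifies any $s$-array; past that time the only steps still changing any $s_a[i]$ with $a\in U^*$ are balance moves, which monotonically drive $\Psi$ down and must therefore empty $U^*$. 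Making precise that this embedded $P_r$-execution terminates despite the extra balance moves, and that on termination the configuration is balanced ($c=w(s)$) wherever needed, is where the bulk of the work lies; alternatively one could induct on the bit level $0,1,\dots,m-1$, exploiting the recursive structure of $P_r$ to show that alignment at level~$i$ becomes permanent for all agents once alignment at all levels $<i$ has.
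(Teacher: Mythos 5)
There is a genuine gap, and you have in fact pointed at it yourself. Your potential argument ($\Psi$ non-increasing, balance moves strictly decreasing it) is sound as far as it goes, but the step that forces a balance move to actually fire is not established: you write that for $a\in U^*$ and $i\in L_a\cup G_a$ there is a partner $b$ in the same group $L_x$ misaligned in the opposite direction, and that ``when such a pair $a,b$ interacts --- which \ldots happens after finitely many further steps --- the balance move fires.'' Weak fairness guarantees that every fixed edge is activated infinitely often, but it does not guarantee that the edge $\{a,b\}$ is activated \emph{while} $b$ still has the required opposite misalignment; both the identity of the suitable partner and the sign of $a$'s own misalignment at $i$ can keep changing through the interleaved $\Gamma_r$-updates. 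You acknowledge this (``the delicate point'') and defer the repair to an unproved claim that the embedded $P_r$-execution terminates; that claim is precisely the bulk of the proof, so the argument as written is incomplete.

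The paper closes exactly this hole with a different bookkeeping. It fixes one unstable agent $a$, sets $I=\{i : c_a[i]\neq w(s_a[i])\}$, observes $|I|$ never increases, and inducts on $|I|$. In the induction step it isolates the group $L_x\setminus U$ (where $U$ is the set of unstable agents of $L_x$ that are already aligned at coordinate $i$ and hence barred from interacting in $P_2(i)$), checks that the two invariants hold for this restricted population, and then argues a trichotomy: either some unstable agent becomes stable (outer progress), or an agent joins $L_x$, or the sub-protocol $P_2(i)$ on $L_x\setminus U$ \emph{stabilizes} by the correctness of $P_r$. Only after this stabilization is the opposite-direction partner $b$ a fixed target, so that the eventual $a$--$b$ interaction removes $i$ from $I$ permanently, and the induction hypothesis applies. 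So the structural decomposition differs (induction on a per-agent coordinate set versus a global potential over $U^*$), but the essential missing ingredient in your write-up --- invoking the stabilization of the embedded $P_2(i)$ execution on the right sub-population to freeze the misalignment pattern before naming a partner --- is the very thing the paper supplies. To complete your version you would need to prove your deferred termination claim, or carry out the level-by-level induction you mention in your last sentence.
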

By the definition of $\pclique$, since $d_a$ remains unchanged after $T$ activations, it is obvious that $l_a$ remains unchanged as well. 
Thus, from the correctness of $P_r$, it follows that the whole system eventually stabilizes and every agent knows the label of the plurality color. 
Furthermore, when an agent $a$ interacts with an agent $b$ with the winning color label, $\pclique$ sets $ans_a$ to $ic_b$. Otherwise, if $a$ has the winning color itself, as soon as it is activated it sets $ans_a$ to $ic_a$ (if it is not set already). 
Therefore, it only remains to prove Lemma \ref{lem:upper}.

\subsubsection{Proof of Lemma \ref{lem:upper}.}

Suppose $T_o$ activations have passed. 
Since after $T_o$ activations, the $d$ value of agents remains unchanged, 
by the definition $\pclique$ it immediately follows that 
the number of unstable agents never increases. 

Hence, to conclude the proof it suffices to prove the following fact. 
\begin{fact}
    \label{fact:unstable}
    Suppose that, after some number of activations $T \geq T_o$ have passed, an unstable agent still exists. Then, after some additional number of activations, the number of unstable agents decreases.
\end{fact}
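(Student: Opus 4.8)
The plan is to establish the slightly stronger claim that, after finitely many more activations, \emph{some} unstable agent becomes stable; since it has already been observed that after $T_o$ the number of unstable agents never increases and this number is non-negative, such an event forces the count to strictly drop, which is exactly Fact~\ref{fact:unstable}. So suppose $T\ge T_o$ activations have elapsed, an unstable agent is present, and --- for contradiction --- assume no unstable agent ever becomes stable afterwards. The crucial consequence is that \emph{no label changes again after $T$}: a stable agent's label is never modified (the $\Gamma_r$-branch of $\gammaclique$ is taken only when $d=l$ and leaves labels intact), while the sole rule altering an unstable agent's label is the one that sets $l_a=d_a$, i.e.\ precisely the rule that would stabilize it. Hence after $T$ every set $L_x$ is frozen, and so is the array $c_a$ of every unstable agent $a$ (for unstable $a$, $c_a$ is touched neither by the $\Gamma_r$-branch, which needs $d_a=l_a$, nor by the broadcast branch, which acts on the \emph{stable} partner). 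Throughout, the two invariants of $P_2(i)$ hold for each $L_x$ with $|x|=i$: $\sum_{a\in L_x}c_a[i]=\sum_{a\in L_x}w(s_a[i])$ and $|w(s_a[i])-c_a[i]|\le 1$; in particular $i\in L_a\cup G_a$ exactly when $w(s_a[i])\ne c_a[i]$, with discrepancy of size $1$.

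I would then analyse, for each coordinate $i\in\{0,\dots,m-1\}$ and each length-$i$ prefix $x$, the dynamics of $\gammaclique$ restricted to the coordinate-$i$ data of the frozen set $L_x$. Only pairs of agents sharing the prefix $x$ ever touch this data, so it evolves as a self-contained process: it is the level-$i$ cancellation/broadcast subprotocol $P_2(i)$ of $P_r$, altered only in that unstable agents never update their bit $c_a[i]$ and that the re-seeding rule~2a may fire. Both alterations preserve the two invariants (as shown earlier in the proof of Theorem~\ref{thm:pluralityclique}), and since $c_a[i]=-c_b[i]$ whenever rule~2a fires, re-seeding is sign-balanced: like ordinary cancellation it preserves $\sum_{a\in L_x}w(s_a[i])$. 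I would prove, by induction on $i$ from $0$ upward --- the induction being needed because a level-$i$ broadcast may, through $\Gamma_r$, cascade into deeper levels, so shallower coordinates must settle first --- that after finitely many more activations the coordinate-$i$ data of $L_x$ stops changing, and in the resulting configuration every unstable $a\in L_x$ has $w(s_a[i])=c_a[i]$. For the latter: in a configuration from which no interaction modifies the coordinate-$i$ data of $L_x$, if some unstable $a\in L_x$ had $w(s_a[i])\ne c_a[i]$, then $\sum_{b\in L_x}\bigl(w(s_b[i])-c_b[i]\bigr)=0$ would force some $b\in L_x$ with $w(s_b[i])-c_b[i]$ of the opposite sign, i.e.\ $a$ and $b$ are both ``cancelled'' at coordinate $i$ with disagreeing bits; by weak fairness $a$ and $b$ eventually interact, which enables rule~2a and contradicts the assumed stationarity.

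Combining over all coordinates and prefixes, after finitely many more activations every unstable agent $a$ satisfies $w(s_a[i])=c_a[i]$ for all $i$, i.e.\ $L_a\cup G_a=\emptyset$, so on its next activation $a$ sets $l_a=d_a$ and becomes stable, contradicting our assumption and proving the Fact. The main obstacle is the per-coordinate convergence in the second paragraph, specifically ruling out a livelock in which an unstable agent's coordinate-$i$ token is repeatedly created by rule~2a and then destroyed by cancellation; handling this requires carefully combining the precise semantics of rules~2a/2b and of the $c$-propagation with the two invariants and with the frozenness of all labels and of the unstable agents' $c$-arrays. It is also the point at which one must check that weak fairness suffices, which it does: once the shallower levels have settled, the level-$i$ subgame runs in a fixed environment and becomes a finite-state cancellation protocol whose convergence under weak fairness parallels the correctness argument for $P_r$.
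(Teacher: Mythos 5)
Your core mechanism is the same as the paper's: use the first invariant $\sum_{a\in L_x} c_a[i]=\sum_{a\in L_x} w(s_a[i])$ to produce, for any unstable $a$ with a discrepancy at coordinate $i$, a partner $b\in L_x$ whose discrepancy has the opposite sign, and observe that once the relevant subgame has quiesced, an interaction between $a$ and $b$ fires rule~2a and zeroes the discrepancy at $i$. The organization differs, though: you argue by contradiction and induct on the prefix depth $i$, trying to show that every level of every $L_x$ settles globally with all unstable agents discrepancy-free; the paper instead fixes one unstable agent $a$, sets $I=\{i \,:\, c_a[i]\neq w(s_a[i])\}$, and inducts on $|I|$, showing that after finitely many activations either some unstable agent stabilizes or $|I|$ strictly decreases.

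This difference matters precisely at the step you flag as ``the main obstacle.'' The livelock you worry about --- an unstable agent's coordinate-$i$ discrepancy being repeatedly re-created after being cancelled --- cannot occur, and the reason is a one-line monotonicity observation that the paper makes and you do not: for an unstable agent $a$, rule~2 only touches $s_a[i]$ for $i\in L_a\cup G_a$ and $c_a[i]$ is frozen, so once $c_a[i]=w(s_a[i])$ that coordinate of $a$ never changes again; hence $I$ is non-increasing, which is exactly what makes the paper's induction on $|I|$ terminate. With this observation your per-level global-settling argument becomes unnecessary: one only needs the subgame $P_2(i)$ restricted to $L_x\setminus U$ (with $U$ the unstable agents already matched at coordinate $i$) to stabilize once, which both you and the paper delegate to the correctness of $P_r$. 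A minor imprecision: your sign-balance justification ``$c_a[i]=-c_b[i]$'' is not what rule~2a guarantees; it is the discrepancies $w(s_a[i])-c_a[i]$ and $w(s_b[i])-c_b[i]$ that are $+1$ and $-1$ by the second invariant, and that is what preserves the sum under re-seeding. As written, your proof has a real hole where the paper's does not, but it is fillable by exactly the monotonicity remark above, after which the two arguments are essentially equivalent.
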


To see why Fact \ref{fact:unstable} holds, suppose that some number of activations $T \geq T_o$ have passed and $a$ is an unstable agent. 
Let $I = \{i | 0 \leq i < m \land c_a[i] \neq w(s_a[i]))\}$. 
Since the protocol does not change $s_a[i]$ and $c_a[i]$ for all $i \notin I$, the size of $I$ never increases. 
We prove Fact \ref{fact:unstable} by induction on $|I|$.

\textbf{Base case $|I| = 0$.}
As soon as $a$ is activated, it will set $l_a$ to $d_a$ and thus the number of unstable agents will decrease.

\textbf{Induction step.}
Suppose $|I| = n, 0 < n \leq |A|$, and for all $|I| < n$, after some activations either $I = \emptyset$ or the number of unstable agents decreases. 
Let $i \in I$ be an integer, and let $x$ denote the $i$-bit prefix of $l_a$. 
Let $U$ be the set of all agents $a \in L_x$ such that $a$ is unstable and $c_a[i] = w(s_a[i])$. 
$\pclique$ does not let agents in $U$ interact in $P_2(i)$, but any two agents from $L_x \setminus U$ can interact with each other. 
It can easily be seen that the two invariants hold for agents in $L_x \setminus U$ in $P_2(i)$. 
After some interactions, we can distinguish the following cases:  
\textit{i)} an unstable agent in $U$ becomes stable, 
\textit{ii)} an unstable agent becomes stable and is added to $L_x$, or 
\textit{iii)} the agents in $L_x \setminus U$ in $P_2(i)$ stabilize.

In the latter case, suppose without loss of generality that $c_a[i] < w(s_a[i])$.
By the first invariant of $P_2(i)$ on $L_x \setminus U$, we know that there will be another agent $b \in L_x \setminus U$ such that $c_b[i] > w(s_b[i])$. 
As soon as $a$ and $b$ interact, the protocol ensures that after the interaction, $i \notin I$. 
Thus, after some number of activations, either the number of unstable agents decreases or the size of $I$ decreases. 
Hence, Fact \ref{fact:unstable} follows by the induction hypothesis, and the proof of Lemma \ref{lem:upper} is completed. 
\qed

\section{General Graphs}
\label{sec:general}

The protocol $\pclique$ works on complete directed graphs, but it can be easily modified to work on complete undirected graphs. We now present a
protocol $\pgeneral$ which works on undirected connected graphs, under a
globally fair scheduler, and finally prove our main result, Theorem \ref{thm:generalupper}.

\subsubsection{Plurality Protocol on General Graphs.}

The idea is that, whenever a pair of agents is activated, 
the two agents can swap their updated states. 
This way, the
agents effectively \emph{travel} on the nodes of the underlying graph 
and possibly interact with other agents that were not initially adjacent.

Therefore, let us define the transition function $\gammageneral(p, q) = \gammaclique(q, p)$,
where $\gammaclique$ is the transition 
functions of modified $\pclique$.
The initialization of $\pgeneral$ is the same as that of $\pclique$

\begin{proof}[Proof of Theorem \ref{thm:generalupper}]
    Let $G$ be any connected graph. 
    Let $\initconf$, $\conf^{(1)}$
    , ... any an infinite sequence of configurations obtained by running $\pgeneral$ on $G$ under a globally fair
    scheduler, where $\initconf$ is the initial configuration. 
    Since the number of
    possible states is finite, the number of possible configurations is also
    finite. 
    Therefore, there exists a configuration $\conf$ that appears
    infinitely often in the sequence. 
    
    For all distinct pairs $u,v \in V(G)$,
    let $\gpath{u}{v}$ be a series of edges forming a path from $u$ to $v$, 
    and suppose that the edges in
    $Path_{u,v}$ gets activated first in the order in which they appear in the path, and then in reverse order. 
    Let $\sact u v$ be the concatenation of such edge activations. 
    If we activate edges according to
    $\sact u v$, then $u$ travels along $\gpath{u}{v}$ (possibly interacting
    with some other agents), until it interacts with $v$, and then travels back to
    its position. 
    Therefore, the sequence of activations $\sact u v$ ensures that pair $\{u,v\}$ of agents interact with each other at least once. 
    If we keep activating edges according to the sequences $\{\sact uv\}_{u,v\in V}$, for each pair of agents $\{u,v\}$, 
    then starting from $\conf$, 
    each pair of agents interact infinitely often. 
   
    Remark that, a globally fair scheduler is also a weakly fair one. By correctness of $\pclique$ under
    a weakly fair scheduler (Theorem \ref{thm:pluralityclique}), by repeating the mentioned edge activation sequence
    starting from $\conf$, a \emph{stable}
    configuration $\conf'$ will be reached (a configuration in which all agents know the initial
    plurality color, and their guess remains correct thereafter).
    Therefore, $\conf'$ is reachable from $\conf$.
    By the definition of a globally fair scheduler, since $\conf$ is infinitely reached, the stable configuration $\conf'$
    is eventually reached.
    
\end{proof}

\end{document}